\newcommand{\DensityMatrixNbeta}{\varrho_N^{\beta,\omega}}
\newcommand{\DensityMatrixNbetaeins}{\varrho_N^{\beta,(1),\omega}}
\newcommand{\AverageParticleDensityNbeta}{\rho_N^{\beta,\omega}}
\newcommand{\AverageParticleDensityNjbeta}{\rho_{N_j}^{\beta,\omega}}
\newcommand{\EnsembleExpectation}[1]{\langle #1 \rangle_{\DensityMatrixNbeta}}
\newcommand{\ud}{\mathrm{d}}
\DeclareMathOperator{\tr}{Tr}
\DeclareMathOperator{\supp}{supp}
\numberwithin{equation}{section}
\newtheorem{theorem}{Theorem}[section]
\newtheorem{lemma}[theorem]{Lemma}
\newtheorem{remark}[theorem]{Remark}
\theoremstyle{definition}
\newtheorem{defn}[theorem]{Definition}
\newcommand{\laplace}{- \mathop{}\!\mathbin\bigtriangleup}
\newcommand{\e}{\mathrm{e}}
\numberwithin{equation}{section}
\begin{document}
	
	\allowdisplaybreaks[1]
	
	\thispagestyle{empty}
	
	\vspace*{1cm}
	
	\begin{center}
		
		{\Large \bf Bose--Einstein condensation for particles with repulsive short-range pair interactions in a Poisson random external potential in $\mathds R^d$} \\

		\vspace*{2cm}
		
		{\large  Joachim~Kerner \footnote{E-mail address: {\tt joachim.kerner@fernuni-hagen.de}} }%
		
		\vspace*{5mm}
		
		Department of Mathematics and Computer Science\\
		FernUniversität in Hagen\\
		58084 Hagen\\
		Germany\\
		
		\vspace*{2cm}

		{\large  Maximilian Pechmann \footnote{E-mail address: {\tt mpechmann@utk.edu}} }%
		
		\vspace*{5mm}
		
		Department of Mathematics\\
		University of Tennessee\\
		Knoxville, TN 37996\\
		USA\\

	\end{center}
	
	\vfill
	
	\begin{abstract} We study Bose gases in $d$ dimensions, $d \ge 2$, with short-range repulsive pair interactions, at positive temperature, in the canonical ensemble and in the thermodynamic limit. We assume the presence of hard Poissonian obstacles and focus on the non-percolation regime. For sufficiently strong interparticle interactions, we show that almost surely there cannot be Bose--Einstein condensation into a sufficiently localized, normalized one-particle state. The results apply to the eigenstates of the underlying one-particle Hamiltonian.
	\end{abstract}
	
	\newpage
	
	\section{Introduction}
	
	An important phenomenon in many-body quantum theory is Bose--Einstein condensation (BEC). It refers to a surprising coherent behavior in (possibly interacting) bosonic many-particle systems which occurs below some critical temperature, or equivalently, above some critical particle density. Originally predicted by Einstein to occur in non-interacting Bose gases in three dimensions \cite{EinsteinBECI,EinsteinBECII}, a rigorous proof of BEC for realistic interacting continuum system was achieved only some twenty years ago \cite{LiebSeiringerProof,LVZ03}. Since then, BEC remained a highly active area in mathematical physics. We refer to \cite{SchleinBECII, SchleinBECI,SchleinBECIII} and references therein for further information on current developments in BEC in a non-random setting. 
	
	An important open question regarding BEC is whether it is stable with respect to repulsive short-range interparticle interactions in the classical thermodynamic limit. Recently, we studied this question in a one-dimensional setting, namely, in the so-called Luttinger--Sy model where the external potential is a random (singular) potential generated by a Poisson point process on $\mathds R$ \cite{KernerPechmannHC}. In the present paper, it is our aim to generalize some of the results obtained there to the higher-dimensional setting. More explicitly, we study BEC in $2 \le d \in \mathds N$ dimensions in the canonical ensemble at positive temperature and in the presence of hard Poissonian obstacles, that is, hard balls of a fixed radius that are distributed according to a Poisson point process on $\mathds R^d$. We assume the intensity of the Poisson point process to be large enough such that no percolation is present. Regarding the interparticle interaction, we explore the hard core case, that is, each boson is assumed to be a hard ball, with a radius that can be constant or converge to zero at any speed. We will also consider more general repulsive pair interactions, modeled by a non-negative function with certain properties. In either case, whenever the pair interaction is sufficiently strong, we show that almost surely there cannot be BEC into a sufficiently localized one-particle state. As a consequence, almost surely there cannot be BEC into any one-particle eigenstate of the underlying one-particle Hamiltonian.
	
    We want to stress that hardcore interactions are not only interesting from a mathematical point of view since particles in realistic gases repel each other strongly at very short distances, as famously expressed by potentials of the Lennard--Jones type. In addition, at positive temperature and whenever the particle density is sufficiently large, BEC is expected to occur (in the grand-canonical ensemble) in a non-interacting Bose gas placed in a Poisson random potential~\cite{kac1973bose, kac1974bose, LenobleZagrebnovLuttingerSy}. Our results then show that such a condensate would be destroyed by the presence of sufficiently repulsive pair interactions. For this reason, it might prove interesting to study generalized BEC in such a scenario (as done, for example, in \cite{KPS18} for the one-dimensional case at zero temperature). In addition, it would be interesting to understand if, for example via the method of enlargement of obstacles (as illustrated, for example, in~\cite{sznitman1998brownian}), some of our results obtained for the non-percolation regime and hard Poissonian obstacles can be carried over to the percolation regime and soft Poissonian obstacles.
	
    The paper is organized as follows: In Section~\ref{The model} we introduce our model and in Section~\ref{Probabilistic results} we discuss probabilistic properties of our system that are used afterwards. In Section~\ref{Results on BEC} we then present our results regarding BEC; we discuss the case of hardcore interactions in Subsection~\ref{subsection BEC hard core} and the case of more general interactions in Subsection~\ref{subsection BEC more general case}.
	
	\section{The model}\label{The model}
	
	We study interacting Bose gases in $\mathbb{R}^d$, $2 \le d \in \mathds N$, and in an external Poisson random potential $V(\omega,x)$. Denoting the underlying probability space by $(\Omega,\Sigma,\mathbb{P})$, on an informal level the external potential reads
	\begin{equation}\label{ExtPot}
	V(\omega,x) :=\sum_{j}u(\| x-x_j^{\omega}\|_{\mathds R^d}) , \quad x \in \mathds R^d\ , \ \omega \in \Omega,
	\end{equation}
	where $\{x_j^{\omega}\}_{j}$ is a set of random points generated by a Poisson point process on $\mathds R^d$ with intensity $\nu > 0$. Furthermore, we assume that the single-site potential $u:\mathbb{R}^d \rightarrow \overline{\mathbb{R}}$ is given by
	\begin{equation}
	u(x):=\begin{cases}
	0 \ , \quad & \text{if} \quad x > R \ , \\
	\infty \ , & \text{else}\ ,
	\end{cases}
	\end{equation}
	where $R > 0$ is a constant. This means that we place hard balls $\overline{B_R(x_j^{\omega})}$ with radius $R > 0$ at each random point $x_j^{\omega}$. Note that such a random potential appears in well-known models such as the Kac--Luttinger model in the area of BEC \cite{kac1973bose,kac1974bose} and the Boolean--Poisson model in stochastic geometry \cite{GouereI}. 
	
	Also, we will investigate BEC in the thermodynamic limit: In this limit, $N$ bosons are placed in the cube $\Lambda_N:=(-L_N/2,+L_N/2)^d \subset \mathbb{R}^d$ of sidelength $L_N >0 $ such that the particle density 
	\begin{equation}
	\rho:=\frac{N}{L_N^d}
	\end{equation}
	remains constant in the limit $N \rightarrow \infty$. The $N$-particle configuration space in the external random potential \eqref{ExtPot} is given by $( \Lambda_N^{\omega} )^N$ with
	\begin{equation}\label{ConfigI}
	\Lambda_N^{\omega}:= (-L_N/2,+L_N/2)^d\setminus \bigcup_{j}\overline{B_{R}(x_j^{\omega})} \ ,
	\end{equation}
    representing the one-particle configuration space.

    Hardcore pair interactions are then introduced by further reducing the configuration space $(\Lambda_N^{\omega} )^N$. For this, we define the set
\begin{equation}
\Lambda^{(\mathrm{HC}), \omega}_{N}:=\{x=(x_j) \in 	\left(\Lambda_N^{\omega}\right)^N: \|x_i -x_j\|_{\mathbb{R}^d} > a_N \ ,\ i,j=1,...,N,\ i \neq j  \}
\end{equation}
where $(a_N)_{N \in \mathbb{N}} \subset (0,\infty)$ denotes the sequence of radii describing the range of the pair interaction. On a rigorous level, the $N$-particle Hamiltonian with hardcore pair interactions is given by the self-adjoint $dN$-dimensional Dirichlet Laplacian defined on  $\mathrm{L}^2_s(\Lambda^{(\mathrm{HC}), \omega}_{N})$; here the index $s$ refers to the totally symmetric subspace of $\mathrm{L}^2(\Lambda^{(\mathrm{HC}), \omega}_{N})$. On an informal level, the $N$-particle Hamiltonian with hardcore pair interaction is given by 
\begin{equation}\label{NBodyHamiltonianhard}
H_N^{\omega}:=\sum_{i=1}^{N}\left(\laplace_i+V(\omega,x_i)\right)+\sum_{1 \le i < j \le N}w_N^{\textrm{hc}}(\|x_i-x_j\|_{\mathbb{R}^d})
\end{equation}
where 
\begin{equation}\label{InteractionPotential}
w_N^{\textrm{hc}}(x):=\begin{cases}
0\ , \quad \text{if}\quad x > a_N\ , \\
\infty \ , \quad \text{else}\ .
\end{cases}
\end{equation}
We study hardcore pair interactions in Subsection~\ref{subsection BEC hard core}.

We also consider a more general class of repulsive pair interactions in Subsection~\ref{subsection BEC more general case}: For all $N \in \mathds N$, with $w_N \in L^{\infty}(\mathds R) \cap L^{1}(\mathds R,x^{d-1}\ud x)$ non-negative, we introduce the $N$-particle Hamiltonian 
\begin{equation}\label{NBodyHamiltoniansoft}
H_N^{\omega}:=\sum_{i=1}^{N}\left(\laplace_i+V(\omega,x_i)\right)+\sum_{1 \le i < j \le N}w_N(\|x_i-x_j\|_{\mathbb{R}^d})
\end{equation}
on the Hilbert space $L_{\mathrm{s}}^2((\Lambda_N^{\omega})^N)$. Finally, we write 
\begin{equation}\label{OneParticleHamil}
h_0^{\omega}:= \laplace +V(\omega,x)\ .
\end{equation}
for the underlying  one-particle Hamiltonian on $L^2(\Lambda_N^{\omega})$.
\section{Probabilistic results} \label{Probabilistic results}
Firstly, we note that the volume of the vacancy set $\Lambda_N^{\omega}$, see~\eqref{ConfigI}, tends to be a constant fraction of $\Lambda_N$ in the limit $N \to \infty$. More precisely, for any $\varepsilon > 0$ we have $\lim_{N \to \infty} \mathds P(| |\Lambda_N^{\omega}|/|\Lambda_N| - \e^{-\nu \omega_d R^d} | < \varepsilon ) = 1$, where $\omega_d$ is the volume of the unit ball in $d$ dimensions, see, for example, \cite[p. 147]{sznitman1998brownian}. Consequently, for any $0 < c <  \e^{-\nu \omega_d R^d}$ there is $\mathds P$-almost surely a subsequence $(N_j)_{j \in \mathds N} \subseteq \mathds N$ such that $|\Lambda_{N_j}^{\omega}| > c |\Lambda_N|$ for all but finitely many $j \in \mathds N$. Also, $\Lambda_N^{\omega}$ is possibly divided into components (regions), but $\mathds P$-almost surely has only finitely many components for each $N \in \mathds N$ \cite[Proposition 4.1]{meester1996continuum}. We denote the component of $\Lambda_N^{\omega}$ with the largest volume by $\Lambda_{N,>}^{(1), \omega}$ and its volume by $|\Lambda_{N,>}^{(1), \omega}|$.

Next, we estimate the volume of the largest component of the vacancy set $\Lambda_N^{\omega}$. Note that $\mathds P$-almost surely a ball free of Poisson points with radius $(\frac{d}{\nu \omega_d})^{1/d} (\ln L_N)^{1/d} - c$ for an arbitrary constant $c > 0$ occurs within $\Lambda_N$ for all but finitely many $N \in \mathds N$, for dimensions $d \ge 2$, see, for example, \cite[Proof of Proposition 4.4.3]{sznitman1998brownian}.
Recall that $\omega_d$ is the volume of the unit ball in $d$ dimensions. On the other hand, we have the following result.

\begin{theorem}\label{TheoremVacancySet}
	Let $2 \le d \in \mathds N$ be given. For any radius $R > 0$ of the hard Poissonian obstacles there is a $\tilde \nu > 0$ such that for all intensities $\nu > \tilde \nu$ of the Poisson random potential the following holds: There is a $\widetilde C_1 > 0$ and a $\widetilde C_2> 0$ such that for the number $A_N^{\omega}$ of disjoint boxes $[s j_1-\frac{s}{2}, s j_1+\frac{s}{2}) \times [s j_2-\frac{s}{2}, s j_2+\frac{s}{2}) \times \ldots \times [s j_d-\frac{s}{2}, s j_d+\frac{s}{2})$ where $j = (j_1, j_2, \ldots, j_d) \in \mathds Z^d$ and $s := R / \sqrt{d}$ that intersect any one component of the vacancy set within $\Lambda_N$ we have
	\begin{equation}
	\lim\limits_{N \to \infty} \mathds P(A_N^{\omega} \le C \ln(L_N)) = 1
	\end{equation}
	as well as
	\begin{equation}
	A_N^{\omega} \le C' \ln(L_N)
	\end{equation}
	$\mathds P$-almost surely for all but finitely many $N \in \mathds N$, for all $C > \widetilde C_1$ and for all $C' > \widetilde C_2$, respectively.
\end{theorem}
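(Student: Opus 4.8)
The plan is to reduce the statement to a standard first-moment (Peierls-type) estimate for subcritical site percolation on $\mathds Z^d$. First I would tile $\Lambda_N$ by the half-open boxes $Q_j := \prod_{k=1}^d [s j_k - s/2, s j_k + s/2)$, $j \in \mathds Z^d$, appearing in the statement. The decisive geometric observation is that each box has diameter $s \sqrt d = R$; hence, if a box $Q_j$ contains at least one Poisson point $x_i^{\omega}$, then every $y \in Q_j$ satisfies $\|y - x_i^{\omega}\|_{\mathds R^d} \le R$, so $Q_j \subseteq \overline{B_R(x_i^{\omega})}$ and $Q_j$ is entirely removed from the vacancy set. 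Call a box \emph{vacant} if it contains no Poisson point; since disjoint boxes carry independent Poisson content, the vacancy events are independent, each of probability $p_0 := \mathds P(Q_j \text{ vacant}) = \e^{-\nu s^d}$ with $s^d = (R/\sqrt d)^d$. Because every point of $\Lambda_N^{\omega}$ lies in a vacant box, any connected component of $\Lambda_N^{\omega}$ meets only vacant boxes, and the vacant boxes it meets form a $*$-connected cluster (boxes whose closures share at least a corner). Consequently $A_N^{\omega}$ is bounded by the size of the largest $*$-connected cluster of vacant boxes inside $\Lambda_N$, and it suffices to control that quantity.

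Next I would run the first-moment bound. Let $\kappa = \kappa(d) < \infty$ be the growth constant for $*$-connected lattice animals, so that the number of $*$-connected clusters of exactly $n$ boxes containing a prescribed box is at most $\kappa^n$. Each fixed such cluster consists of vacant boxes with probability $p_0^n$. Summing over the at most $c_d L_N^d$ boxes meeting $\Lambda_N$ (with $c_d = s^{-d}$ up to boundary corrections) and over all animals rooted there gives
\begin{equation*}
\mathds P\big(A_N^{\omega} \ge n\big) \le c_d L_N^d \sum_{m \ge n} (\kappa p_0)^m \le C_d\, L_N^d\, (\kappa p_0)^n,
\end{equation*}
provided $\kappa p_0 < 1$. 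This is exactly where the intensity threshold enters: choosing $\tilde\nu := (\ln \kappa)/s^d$, for every $\nu > \tilde\nu$ we have $\alpha := \nu s^d - \ln\kappa > 0$ and $\kappa p_0 = \e^{-\alpha} < 1$, so $\mathds P(A_N^{\omega} \ge n) \le C_d L_N^d \e^{-\alpha n}$.

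Finally I would extract the two conclusions by tuning $n$ to $\ln L_N$. For the convergence in probability, set $n = \lceil C \ln L_N \rceil$; then the bound reads $C_d L_N^{\,d - \alpha C}$, which tends to $0$ as $N \to \infty$ whenever $\alpha C > d$, i.e.\ for all $C > \widetilde C_1 := d/\alpha$. For the almost-sure statement I would invoke Borel--Cantelli: since $\rho = N/L_N^d$ is fixed we have $L_N \asymp N^{1/d}$, so $\mathds P(A_N^{\omega} > C' \ln L_N) \le C_d' N^{\,1 - \alpha C'/d}$, and $\sum_N N^{1-\alpha C'/d} < \infty$ as soon as $\alpha C'/d > 2$, i.e.\ for all $C' > \widetilde C_2 := 2d/\alpha$; the event $\{A_N^{\omega} > C' \ln L_N\}$ then occurs for only finitely many $N$ almost surely.

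The main obstacle is not any single estimate but the two structural inputs feeding the first-moment computation: (i) the geometric correspondence showing that a continuum component of $\Lambda_N^{\omega}$ induces a $*$-connected cluster of vacant boxes of at least the same box-count, which must be argued carefully at box boundaries; and (ii) the combinatorial bound $\kappa^n$ on the number of rooted $*$-connected animals, together with the verification that the resulting threshold $\tilde\nu$ depends only on $R$ and $d$ (through $s = R/\sqrt d$ and $\kappa(d)$) and not on $N$. Everything else is a routine union bound and Borel--Cantelli, and the appearance of $\ln L_N$ is forced by balancing the volume factor $L_N^d$ against the exponential cluster decay $\e^{-\alpha n}$.
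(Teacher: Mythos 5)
Your proposal is correct, and its outer skeleton coincides with the paper's proof: the same tiling of $\mathds R^d$ by half-open boxes of sidelength $s = R/\sqrt{d}$, chosen precisely so that the box diameter equals $R$ and any box containing a Poisson point is wholly deleted from the vacancy set; the same reduction of $A_N^{\omega}$ to the size of a vacant cluster in a discretized site-percolation model; and the same endgame, a union bound over the $O(L_N^d)$ boxes meeting $\Lambda_N$ with $n \asymp \ln L_N$, giving convergence of the probability when the decay rate times $C$ exceeds $d$ and, after converting $L_N^d = N/\rho$, summability plus Borel--Cantelli for the almost-sure statement when the corresponding exponent exceeds $2$ (the paper's condition $CC_2 > 2$ matches your $\alpha C'/d > 2$). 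The genuine difference is the engine producing the exponential tail for the vacant cluster size: you derive $\mathds P(\#W^{\omega}(0) \ge n) \le C_d(\kappa\, \mathrm{e}^{-\nu s^d})^n$ by an elementary first-moment count over rooted lattice animals, which works above the explicit threshold $\tilde\nu = s^{-d}\ln\kappa(d)$, whereas the paper invokes the sharp subcritical-percolation theorems (Kesten, Aizenman--Barsky, Menshikov) to get $\mathds P(\#W^{\omega}(0) \ge n) \le C_1 \mathrm{e}^{-C_2 n}$ for \emph{all} $\nu$ above the true critical intensity $\nu_c$, using a Peierls argument only to certify $\nu_c < \infty$. Your route is self-contained and makes $\tilde\nu$ explicit in $R$ and $d$; the paper's buys the full range $(\nu_c,\infty)$ at the cost of citing deep results --- and since the theorem only asserts the existence of some $\tilde\nu$, both are adequate. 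A minor structural divergence: you use $*$-connectivity for the induced cluster while the paper uses nearest-neighbor adjacency; yours is the safer choice at box boundaries (a continuum path crossing between boxes only guarantees that consecutive boxes share a closure point), though in this particular model nearest-neighbor clusters also suffice, because an occupied box's closed hull lies inside the ball of its obstacle, so a component cannot pass through a corner touching an occupied box, and when all boxes around a corner are vacant the diagonal pair is joined through the common neighbors in any case.
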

\begin{proof}
	We partition $\mathds R^d$ into the boxes $[s j_1-\frac{s}{2}, s j_1+\frac{s}{2}) \times [s j_2-\frac{s}{2}, s j_2+\frac{s}{2}) \times \ldots \times [s j_d-\frac{s}{2}, s j_d+\frac{s}{2})$ where $j = (j_1, j_2, \ldots, j_d) \in \mathds Z^d$ and $s := R / \sqrt{d}$. The centers of the boxes are then given by the points $(s j_1, s j_2, \ldots, s j_d)$ and we shall call them \textit{vertices}. Vertices with an Euclidean distance of $s$ are called \textit{adjacent} and consequently we obtain a discrete graph $\mathcal G$. A sequence $(v_i)_{j=1}^J$, $J \in \{ 1,2,...,\infty\}$, of vertices in $\mathcal G$ such that $v_i$ and $v_{i+1}$ are adjacent for all $j \in \{1,2,\ldots,J-1\}$ is called a \textit{path} in $\mathcal G$. A path in $\mathcal G$ is called \textit{finite} if $J < \infty$ and \textit{infinite} whenever $J=\infty$. 
	
	It is important to note that if a Poisson point $x_j^{\omega}$ is contained in such a box, then the box is not contained in $\Lambda_N^{\omega}$ (informally, this is equivalent to say that the external potential is infinitely high across the box). We call a vertex \textit{vacant} if the corresponding box does not contain any Poisson point, and \textit{occupied} if the corresponding box contains at least one Poisson point. In the same way we call a path in $\mathcal G$ vacant if the path contains only vacant vertices. 
	
	Furthermore, we shall assume that the intensity of the Poisson point process $\nu > 0$ is larger than the critical intensity
	\begin{equation}
	\nu_c := \inf\{ \nu > 0 : \theta^0(\nu) = 0 \}
	\end{equation}
	where
	\begin{equation}
	\theta^{0}(\nu) = \mathds P(\exists \text{ an infinite, self-avoiding, vacant path starting at } 0) \ .
	\end{equation}
	Note that $0 < \nu_c < \infty$, due to a Peierls argument and since the graph $\mathcal G$ is of finite degree, see \cite[p. 349]{kesten2002some}, \cite{hammersley1957percolation}.
	
	Now, let $W^{\omega}(v)$, $v \in s\mathds Z^d = (s j_1, s j_2, \ldots, s j_d)$, $j = (j_1, j_2, \ldots, j_d) \in \mathds Z^d$, be the union of all vertices that can be reached by a vacant path on $\mathcal G$ from $v$, and let $\#W^{\omega}(v)$ denote the number of vertices in $W^{\omega}(v)$. Due to \cite[Theorem 2]{kesten2002some}, \cite{aizenman1987sharpness} and \cite{menshikov1986coincidence}, there are constants $0 < C_1,C_2 < \infty$ such that for any $n \in \mathds N$,
	\begin{equation}\label{probability largest region}
	\mathds P(\#W^{\omega}(0) \ge n) \le C_1 \mathrm{e}^{-C_2 n} \ .
	\end{equation}
	We choose a $C > C_2^{-1}$ and set $n = C \ln((L_N +2)^d)$. Using inequality~\eqref{probability largest region}, we conclude that for any $N \in \mathds N$ the probability that the number of boxes $[s j_1-\frac{s}{2}, s j_1+\frac{s}{2}) \times [s j_2-\frac{s}{2}, s j_2+\frac{s}{2}) \times \ldots \times [s j_d-\frac{s}{2}, s j_d+\frac{s}{2})$ intersecting the largest vacancy set $\Lambda_{N,>}^{(1), \omega}$ is equal to or larger than $n$ is bounded from above by
	\begin{align}
	\begin{split}
	\sum\limits_{v \in s\mathds Z^d \cap (-\lceil L_N/2 \rceil , + \lceil L_N/2 \rceil)^d} \mathds P(\#W^{\omega}(v) \ge n) & \le s^d (L_N + 2)^d \mathds P(\#W^{\omega}(0) \ge n) \\
	& \le C_1 s^d [(L_N + 2)^d]^{1 - C C_2} \ ,
	\end{split}
	\end{align}
	which converges to zero in the limit $N \to \infty$, since $C > C_2^{-1}$. In addition, if $CC_2 > 2$, then, using the Borel--Cantelli lemma, we conclude that for $\mathds P$-almost all $\omega \in \Omega$ there exists an $\widetilde N \in \mathds N$ such that for all $N \ge \widetilde N$ the number of these boxes intersecting any component of the vacancy set within $\Lambda_N$ is smaller than $C \ln((L_N +2)^d)$.
\end{proof}
\begin{remark} \label{Remark 1}
	This theorem implies the following: Suppose that the intensity of the Poisson random potential is sufficiently large. Then the probability that the volume of the largest component $\Lambda_{N,>}^{(1), \omega}$ is bounded by $C\ln(L_N)$, for a constant $C > 0$ sufficiently large, converges to one; that is, there is a $\widetilde C > 0$ such that for all $C > \widetilde C$,
	\begin{align}
	\lim\limits_{N \to \infty} \mathds P(|\Lambda_{N,>}^{(1), \omega}| <  C \ln(L_N)) = 1 \ .
	\end{align}
	In addition, there is a $\widetilde C > 0$ such that for all $C > \widetilde C$ and for $\mathds P$-almost all $\omega \in \Omega$ there is an $\widetilde N \in \mathds N$ such that for all $N \ge \widetilde N$, $|\Lambda_{N,>}^{(1), \omega}| <  C \ln(L_N)$.
\end{remark}

For the proof of Lemma~\ref{EnergyDensity} in Subsection~\ref{subsection BEC more general case}, we need the following lemma. It is a statement about the number of disjoint balls with a given constant radius within $\Lambda_N$ that are free of Poisson points.
\begin{lemma} \label{Lemma 3.3}
	Let $d \ge 2$ and $\nu > 0$ be given. Also, let $(c_N)_{N \in \mathds N}$ be a sequence that goes to infinity. Then, for $\mathds P$-almost all $\omega \in \Omega$, there exists an $\widetilde N \in \mathds N$ such that for all $N \ge \widetilde N$ the number $B_N^{\omega}$ of disjoint balls with diameter $1$ that are completely within $\Lambda_N$ and are free of Poisson points is at least $L_N^d/(c_N \ln(N))$, that is, 
	\begin{align} \label{Inequality number of free balls}
	B_N^{\omega} & \ge \dfrac{L_N^d}{c_N \ln(N)} \ .
	\end{align}
\end{lemma}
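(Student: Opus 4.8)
The plan is to exhibit an explicit large family of candidate balls and to show that a definite positive fraction of them is free of Poisson points, which already exceeds the required lower bound by a diverging factor. First I would partition $\Lambda_N$ into axis-parallel cubes of side length $1$; the number $M_N$ of such cubes lying entirely inside $\Lambda_N$ satisfies $M_N \ge (L_N - 2)^d \ge \tfrac12 L_N^d$ for all sufficiently large $N$. Inside each such cube I inscribe the ball of diameter $1$ centred at the cube's centre. These balls are pairwise disjoint and contained in $\Lambda_N$, and a given ball is free of Poisson points precisely when the Poisson point process places no point in it. Since the process has intensity $\nu$ and each ball has volume $v := \omega_d 2^{-d}$, each ball is empty with probability $p := \e^{-\nu v} \in (0,1)$, and—because the balls occupy pairwise disjoint regions of $\mathds R^d$—these emptiness events are mutually independent.

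Next I would estimate the number of empty balls from below. Let $X_N$ denote the number of the $M_N$ inscribed balls that are free of Poisson points; then $X_N$ is a sum of $M_N$ independent $\mathrm{Bernoulli}(p)$ random variables, so $\E[X_N] = p M_N$, and clearly $B_N^{\omega} \ge X_N$ since the empty inscribed balls form an admissible disjoint family. A multiplicative Chernoff bound gives
\begin{equation}
\mathds P\!\left( X_N \le \tfrac12 p M_N \right) \le \e^{-p M_N/8} \ .
\end{equation}
Because $\rho = N/L_N^d$ is fixed, $M_N \ge \tfrac12 L_N^d = N/(2\rho)$ grows linearly in $N$, so the right-hand side is summable in $N$. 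The Borel--Cantelli lemma then yields that for $\mathds P$-almost all $\omega \in \Omega$ there is an $\widetilde N \in \mathds N$ with $X_N > \tfrac12 p M_N \ge \tfrac14 p L_N^d$ for all $N \ge \widetilde N$.

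Finally, since $(c_N)_{N \in \mathds N}$ tends to infinity we have $c_N \ln(N) \to \infty$, hence $L_N^d/(c_N \ln N) = o(L_N^d)$ and in particular $L_N^d/(c_N \ln N) \le \tfrac14 p L_N^d$ for all sufficiently large $N$. Combining this with the previous bound gives $B_N^{\omega} \ge X_N \ge \tfrac14 p L_N^d \ge L_N^d/(c_N \ln N)$ for $\mathds P$-almost all $\omega$ and all but finitely many $N$, which is the assertion. The argument is quantitatively very forgiving—the required count is smaller than the typical count $\Theta(L_N^d)$ by the diverging factor $c_N \ln N$—so no sharp estimate is needed, and I do not expect a genuine obstacle. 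The only points that require care are the independence of the emptiness events over disjoint balls (immediate from the Poisson property) and the bookkeeping for the boundary layer of cubes protruding from $\Lambda_N$, both of which are routine.
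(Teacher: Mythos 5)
Your proof is correct, and its skeleton coincides with the paper's: both lay down an explicit lattice family of order $L_N^d$ pairwise disjoint balls of diameter $1$ inside $\Lambda_N$ (the paper centres $(2\lfloor L_N/2\rfloor-1)^d$ such balls on an integer lattice, which is the same construction as your balls inscribed in a unit-cube partition), both use the Poisson independence over disjoint regions to regard the number $X_N$ of empty balls as a binomial random variable with success probability $p=\e^{-\nu\omega_d 2^{-d}}$, and both finish with a lower-tail bound, Borel--Cantelli, and the trivial observation $B_N^{\omega}\ge X_N$. The genuine difference lies in the tail estimate. The paper bounds $\mathds P\bigl(X_N < \lceil L_N^d/(c_N\ln L_N)\rceil\bigr)$ by summing the binomial probabilities directly, estimating $\binom{M_N}{i}\le L_N^{di}$ and $c^i\le 1$; this crude bound closes only because the target count is $o(L_N^d/\ln L_N)$, so that the entropy factor $\exp\{dL_N^d/c_N\}$ is absorbed by $(1-c)$ raised to a power of order $L_N^d$ --- which is exactly why the diverging denominator $c_N\ln N$ appears in \eqref{Inequality number of free balls} and in the remark following Lemma~\ref{Lemma 3.3}. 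Your multiplicative Chernoff bound at half the mean costs nothing extra and proves the strictly stronger statement $B_N^{\omega}\ge \tfrac14 p L_N^d$ eventually $\mathds P$-almost surely, from which \eqref{Inequality number of free balls} follows at once since $c_N\ln N\to\infty$. In other words, your route shows that the $c_N\ln N$ factor is an artifact of the paper's elementary tail estimate rather than of the counting problem itself; with a linear-in-volume count the interaction term in the trial-state estimate of Lemma~\ref{EnergyDensity} is of order $\|w_N(\|\cdot\|_{\mathds R^d})\|_{L^1(\mathds R^d)}$ per unit volume, so for that lemma mere boundedness of the $L^1$ norm would suffice in place of the condition $\ln^2(N)\|w_N(\|\cdot\|_{\mathds R^d})\|_{L^1(\mathds R^d)}\to 0$. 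The two points you flag as delicate --- mutual independence of the emptiness events and the boundary layer of cubes protruding from $\Lambda_N$ --- are handled correctly and are indeed routine.
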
 
\begin{proof}
	We shall put $(2\lfloor L_N/2 \rfloor - 1)^d$ many disjoint balls each with diameter $1$ in $\Lambda_N$. More specifically, the balls shall have the centers $j = (j_1, j_2, \ldots, j_d) \in \mathds Z^d$ with $j_i \in [- \lfloor L_N/2 \rfloor + 1/2, \lfloor L_N/2 \rfloor - 1/2]$. 
	
	Next, we derive an upper bound on the probability that less than $\lceil L_N^d/(c_N \ln(L_N)) \rceil $ of these balls are free of Poisson points: If $0 <c < 1$ denotes the probability that one given ball is free of Poisson points, then $c= \e^{- \nu \pi^{d/2} 2^{-d} (\Gamma(d/2+1))^{-1}}$. Furthermore, 
	\begin{align}
	\begin{split}
	& \sum\limits_{i=0}^{\lceil L_N^d/(c_N \ln(L_N)) \rceil -1} \binom{(2\lfloor L_N/2 \rfloor -1)^d}{i} c^i (1-c)^{(2\lfloor L_N/2 \rfloor - 1)^d-i} \\
	& \quad \le \dfrac{L_N^d}{c_N \ln(L_N)} L_N^{\tfrac{d L_N^d}{c_N \ln(L_N)}} (1-c)^{{(2\lfloor L_N/2 \rfloor - 1)^d}-\tfrac{L_N^d}{c_N \ln(L_N)}} \\
	& \quad \le \exp\left\{d \ln(L_N) + \frac{dL_N^d}{c_N} + \Big[ (2\lfloor L_N/2 \rfloor - 1)^d-\frac{L_N^d}{c_N \ln(L_N)} \Big] \ln(1-c) \right\} \\
	& \quad \le \e^{3^{-d} L_N^d  \ln(1-c)}
	\end{split}
	\end{align}
	for all but finitely many $N \in \mathds N$. Since
	\begin{align}
	\sum\limits_{N \in \mathds N} \e^{3^{-d} L_N^d  \ln(1-c)} = \sum\limits_{N \in \mathds N} ((1-c)^{3^{-d} \rho^{-1}})^N \le \dfrac{1}{1 - (1-c)^{3^{-d} \rho^{-1}}} < \infty \ ,
	\end{align}
the claim follows with the lemma of Borel--Cantelli.
\end{proof}

We would like to comment on the main difference between Lemma~\ref{Lemma 3.3} and the corresponding one-dimensional result \cite[Lemma~A.1]{KernerPechmannHC}.
In the one-dimensional case, the lengths of the intervals that are introduced by a Poisson point process on the real line are independent, identically distributed random variables with exponential distribution. This fact was used in the proof  of Lemma~A.1 in \cite{KernerPechmannHC}. In higher dimensions, however, we know less about the distribution of the volume of the components of the vacancy set. To offset this, we require that the denominator in \eqref{Inequality number of free balls} converges to infinity at a sufficient speed, and needed to use a different strategy here compared to the corresponding one-dimensional case.

\section{Results on Bose--Einstein condensation} \label{Results on BEC}
In this section, we are going to apply the probabilistic results derived in Section~\ref{Probabilistic results} in order to say something about BEC in a system of interacting bosons placed in a random environment. In fact, we will consider two kinds of pair interactions: hardcore interactions where each particle is modeled as a hard ball with a certain radius and more general soft interactions. Physically, hardcore interactions are described by the informal Hamiltonian~\eqref{NBodyHamiltonianhard} and soft interactions by the Hamiltonian~\eqref{NBodyHamiltoniansoft}. As mentioned before, our aim is to generalize the results from~\cite{KernerPechmannHC} to the higher-dimensional setting.

In the canonical ensemble, the $N$-particle state of the system (the density matrix) at inverse temperature $\beta=1/T \in (0,\infty)$ is given by
\begin{equation}
\DensityMatrixNbeta=\frac{\mathrm{e}^{-\beta H_{N}^{\omega}}}{\tr(\mathrm{e}^{-\beta H_{N}^{\omega}})}\ .
\end{equation}
Here $\tr(\cdot)$ refers to the trace of a (trace-class) operator on the $N$-particle Hilbert space $\mathrm{L}^2_s(\Lambda^{(\mathrm{HC}), \omega}_{N})$; also, let  $\DensityMatrixNbeta(\cdot,\cdot)$ denote the kernel of $\DensityMatrixNbeta$. In order to calculate the density of particles in a given one-particle state, one uses the reduced one-particle density matrix which acts as a trace-class operator on the underlying one-particle Hilbert space $\mathrm{L}^2(\Lambda_N)$, see \cite{M07}. The kernel of the corresponding reduced one-particle density matrix is then obtained as
\begin{equation}
\DensityMatrixNbetaeins(x,y)=N \int_{\Lambda_{N}^{\omega}} \ud z_1 ... \int_{\Lambda_{N}^{\omega}} \ud z_{N-1}\ \DensityMatrixNbeta(x,z_1,...,z_{N-1},y,z_1,...,z_{N-1})\ 
\end{equation}
with $x,y \in \Lambda_{N}^{\omega}$. Consequently, the average particle density in a one-particle state $\varphi \in \mathrm{L}^2(\Lambda_{N}^{\omega})$ can be calculated as
\begin{equation}
\AverageParticleDensityNbeta(\varphi) := \frac{1}{L^d_N}\int_{\Lambda_{N}^{\omega}} \int_{\Lambda_{N}^{\omega}} \overline{\varphi(x)}\DensityMatrixNbetaeins(x,y) \varphi(y)\, \ud y \, \ud x
\end{equation} 
see, for example, \cite{PuleAonghusa87,M07}. This leads to the following definition. 
\begin{defn}\label{DefMacI} Let $\omega \in \Omega$ and $\varphi_N^{\omega} \in \mathrm{L}^2(\Lambda_{N}^{\omega})$ be a normalized one-particle state, $N \in \mathds N$. We call $(\varphi_{N}^{\omega})_{N \in \mathds N}$ macroscopically occupied at inverse temperature $\beta \in (0,\infty)$ if
	\begin{equation}
	\limsup_{N \rightarrow \infty} \AverageParticleDensityNbeta(\varphi_N^{\omega}) > 0 \ .
	\end{equation}
	In this case, we say that BEC into $(\varphi_{N}^{\omega})_{N \in \mathds N}$ is present.
\end{defn}

\subsection{Hardcore interactions} \label{subsection BEC hard core}

We firstly consider the $N$-particle Hamiltonian with hardcore pair interaction,
\begin{equation}
H_N^{\omega}=\sum_{i=1}^{N}\left(\laplace_i+V(\omega,x_i)\right)+\sum_{1 \le i < j \le N}w_N^{\textrm{hc}}(\|x_i-x_j\|_{\mathbb{R}^d})
\end{equation}
where 
\begin{equation}
w_N^{\textrm{hc}}(x):=\begin{cases}
0\ , \quad \text{if}\quad x > a_N\ , \\
\infty \ , \quad \text{else}\ .
\end{cases}
\end{equation}
We decompose $\mathds{R}^d$ into the boxes
\begin{equation}
\Lambda^{(n)}_N:=\{x \in \mathds{R}^d: r_Nn_j \leq x_j < r_N(n_j+1)\ , \ j=1,...,d \}  \ ,
\end{equation}
where $n = (n_1, n_2, \ldots, n_d) \in \mathds{Z}^d$ and $N \in \mathds N$. If the sidelength of these boxes satisfy $r_N \leq a_N/\sqrt{d}$, then any box $\Lambda^{(n)}_N$ can by occupied by at most one particle. Consequently, for a normalized one-particle state $\varphi_N^{\omega} \in L^2(\Lambda_N^{\omega})$, $N \in \mathds N$, $\omega \in \Omega$, we have that $\mathds P$-almost surely, for all $\beta \in (0,\infty)$, and for all but finitely many $N \in \mathds N$,
	\begin{equation} \label{inequality PuleAonghusa87}
	\AverageParticleDensityNbeta(\varphi_N^{\omega}) \leq \frac{1}{L^d_N}\Big(\sum_{n \in \mathds{Z}^d} \big( \int_{\Lambda^{(n)}_N}|\varphi_N^{\omega}(x)|^2\ \ud x \big)^{1/2} \Big)^2 \ ,
	\end{equation}
see \cite[Lemma~2]{PuleAonghusa87}. Note that each $\varphi_N^{\omega}$ in \eqref{inequality PuleAonghusa87} is understood to be extended by zero to all of $\mathds R^d$. We now can give and prove the main statement of this subsection.

\begin{theorem}[Absence of BEC I] \label{MainResultBEC1} Let $\beta \in (0,\infty)$ be arbitrarily given. We assume that $R > 0$ and $\nu > 0$ are such that Theorem~\ref{TheoremVacancySet} holds. Suppose that the sequence of radii $(a_N)_{N \in \mathds{N}}$ is such that
	\begin{equation}
	\lim\limits_{N \to \infty} \dfrac{1}{N} \left( \dfrac{\ln(N)}{a_N^d} \right)^2 = 0 \ .
	\end{equation}
	Then, for $\mathds P$-almost $\omega \in \Omega$, we have the following result: If $(\varphi_N^{\omega})_{N \in \mathds N}$, $\varphi_N^{\omega} \in L^2(\Lambda_N^{\omega})$ for all $N \in \mathds N$, is a sequence of normalized one-particle states for which the number $A_N^{\omega}$ of components of $\Lambda_N^{\omega}$ intersecting $\supp(\varphi_N^{\omega})$ satisfies
	\begin{equation}
	\lim_{N \to \infty} \dfrac{1}{N} \left( \dfrac{A_N^{\omega}\ln(N)}{a_N^d}\right)^2 = 0 \ ,
	\end{equation}
	then $(\varphi_N^{\omega})_{N \in \mathds N}$ is not macroscopically occupied, that is, there cannot exist a subsequence $(N_j)_{j \in \mathds N} \subseteq \mathds N$ such that
	\begin{equation}
	\lim_{j \rightarrow \infty} \AverageParticleDensityNjbeta(\varphi_{N_j}^{\omega}) > 0 \ . 
	\end{equation}
\end{theorem}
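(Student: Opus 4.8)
The plan is to start from the deterministic inequality~\eqref{inequality PuleAonghusa87}, which already encodes the hardcore constraint through the one-particle-per-box property of the boxes $\Lambda_N^{(n)}$ of sidelength $r_N \le a_N/\sqrt{d}$, and to reduce the entire statement to a counting estimate for these boxes. Writing $p_n := \int_{\Lambda_N^{(n)}} |\varphi_N^\omega(x)|^2\,\ud x$, so that $\sum_n p_n = 1$ by normalization, the Cauchy--Schwarz inequality gives $\big(\sum_n \sqrt{p_n}\big)^2 \le M_N^\omega \sum_n p_n = M_N^\omega$, where $M_N^\omega$ denotes the number of boxes $\Lambda_N^{(n)}$ that meet $\supp(\varphi_N^\omega)$. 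Combined with~\eqref{inequality PuleAonghusa87} and $L_N^d = N/\rho$, this yields the clean bound $\AverageParticleDensityNbeta(\varphi_N^\omega) \le M_N^\omega/L_N^d = \rho\,M_N^\omega/N$, so everything comes down to showing $M_N^\omega = o(N)$ under the stated hypotheses.

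The heart of the argument is therefore a purely geometric count of the fine boxes, fed by the probabilistic input of Theorem~\ref{TheoremVacancySet}. First I would fix $r_N := a_N/\sqrt{d}$ (the largest admissible sidelength) and observe that $\supp(\varphi_N^\omega)$ lies in the union of the $A_N^\omega$ components of $\Lambda_N^\omega$ it meets. By Theorem~\ref{TheoremVacancySet} (equivalently Remark~\ref{Remark 1}), $\mathds P$-almost surely there is an $\widetilde N$ such that for all $N \ge \widetilde N$ every single component of the vacancy set is met by at most $C' \ln(L_N)$ of the fixed-scale boxes of sidelength $s = R/\sqrt{d}$. Each such fixed-scale box is met by at most $(\lceil s/r_N\rceil + 1)^d$ of the fine boxes of sidelength $r_N$, and with the above choice $s/r_N = R/a_N$, so this factor is $\asymp a_N^{-d}$. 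Summing over the $A_N^\omega$ relevant components (over-counting boxes shared by several of them) gives, $\mathds P$-almost surely and for all but finitely many $N$,
\begin{equation*}
M_N^\omega \ \le\ A_N^\omega\, C' \ln(L_N)\,(\lceil R/a_N\rceil + 1)^d \ \lesssim\ \frac{A_N^\omega \ln(N)}{a_N^d},
\end{equation*}
where I used $\ln(L_N) = \tfrac{1}{d}\ln(N/\rho) \asymp \ln(N)$.

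Finally I would combine the two bounds: $\mathds P$-almost surely and for all but finitely many $N$,
\begin{equation*}
\AverageParticleDensityNbeta(\varphi_N^\omega) \ \le\ \frac{\rho\,M_N^\omega}{N} \ \lesssim\ \frac{1}{N}\,\frac{A_N^\omega \ln(N)}{a_N^d},
\end{equation*}
and the hypothesis $\lim_{N\to\infty} \tfrac{1}{N}\big(A_N^\omega \ln(N)/a_N^d\big)^2 = 0$ forces $A_N^\omega \ln(N)/a_N^d = o(\sqrt{N})$, so the right-hand side tends to $0$. Thus $\AverageParticleDensityNbeta(\varphi_N^\omega) \to 0$ along the full sequence, whence $\limsup_N \AverageParticleDensityNbeta(\varphi_N^\omega) = 0$, and in particular no subsequence $(N_j)$ can satisfy $\lim_j \AverageParticleDensityNjbeta(\varphi_{N_j}^\omega) > 0$; that is, $(\varphi_N^\omega)_{N}$ is not macroscopically occupied. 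I expect the main obstacle to be the scale-conversion count in the second step: Theorem~\ref{TheoremVacancySet} controls the vacancy set only at the fixed scale $s = R/\sqrt{d}$, whereas the one-particle-per-box mechanism lives at the shrinking scale $r_N \sim a_N$, so care is needed to pass between the two scales uniformly and to keep the estimate valid $\mathds P$-almost surely for all large $N$. I also note that the Cauchy--Schwarz step already delivers the conclusion under the weaker requirement $A_N^\omega \ln(N)/a_N^d = o(N)$; the squared form in the hypothesis is precisely what one obtains if one instead bounds $\sum_n \sqrt{p_n}$ crudely by $M_N^\omega$, and it is in any case comfortably sufficient.
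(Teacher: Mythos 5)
Your proof is correct, and it follows the same skeleton as the paper's proof --- the starting inequality~\eqref{inequality PuleAonghusa87}, the per-component box count of Theorem~\ref{TheoremVacancySet} (valid $\mathds P$-almost surely for all but finitely many $N$), and the conversion from the fixed scale $s=R/\sqrt{d}$ to the interaction scale $r_N=a_N/\sqrt{d}$, yielding $M_N^{\omega} \lesssim A_N^{\omega}\ln(N)/a_N^d$ --- but it deviates at one step, and the deviation is a genuine sharpening. The paper estimates $\sum_n \sqrt{p_n}$ (in your notation $p_n:=\int_{\Lambda_N^{(n)}}|\varphi_N^{\omega}(x)|^2\,\ud x$) by bounding each nonzero term by $1$, which gives $\AverageParticleDensityNbeta(\varphi_N^{\omega}) \le (M_N^{\omega})^2/L_N^d$; this \emph{squared} count is precisely why the hypothesis of the theorem appears in the form $\frac{1}{N}\bigl(A_N^{\omega}\ln(N)/a_N^d\bigr)^2 \to 0$. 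Your Cauchy--Schwarz step $\bigl(\sum_n \sqrt{p_n}\bigr)^2 \le M_N^{\omega}\sum_n p_n = M_N^{\omega}$ replaces the square by a linear factor, giving $\AverageParticleDensityNbeta(\varphi_N^{\omega}) \le \rho\, M_N^{\omega}/N$, so that --- as you note yourself --- the conclusion already follows under the weaker condition $A_N^{\omega}\ln(N)/a_N^d = o(N)$, of which the theorem's hypothesis (equivalent to $o(\sqrt{N})$) is a special case; the stated hypothesis is thus comfortably sufficient for your argument. One small tacit assumption in your scale-conversion bound $(\lceil R/a_N\rceil+1)^d \lesssim a_N^{-d}$ is that $(a_N)$ is bounded above; this is harmless, since otherwise the hardcore configuration space could not hold $N$ particles at fixed density (and the paper's setting has $a_N$ constant or tending to zero), and the paper's own third inequality in its proof makes the same implicit use of two-scale counting. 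In short: same route, one sharper step, and a quantitatively stronger version of the theorem falls out for free.
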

\begin{proof} The proof is obtained from a suitable adaptation of the proof of \cite[Theorem~3.3]{KernerPechmannHC}: Using inequality~\eqref{inequality PuleAonghusa87} and Theorem~\ref{TheoremVacancySet}, we have for a constant $C >0$ $\mathds P$-almost surely,
	\begin{equation}
	 \begin{split}
	\lim_{N \to \infty} \AverageParticleDensityNbeta(\varphi_N^{\omega}) &\leq \lim_{N \to \infty} \frac{1}{L^d_N}\left(\sum_{n \in \mathds{Z}^d}\left(\int_{\Lambda^{(n)}_N}|\varphi_N^{\omega}(x)|^2\ \ud x\right)^{1/2}\right)^2 \\
	&\leq \lim_{N \to \infty} \frac{1}{L^d_N}\left(\sum_{n \in \mathds{Z}^d: \supp(\varphi_N^{\omega})\cap \Lambda^{(n)}_N \neq \emptyset}1\right)^2 \\
	& \leq C \lim_{N \to \infty} \frac{1}{L^d_N } \left(\dfrac{A_N^{\omega} \ln(N)}{a_N^d} \right)^2 \\
	& = 0 \ .
	\end{split}
	\end{equation}
\end{proof}

In particular, Theorem~\ref{MainResultBEC1} shows that $\mathds P$-almost surely all eigenstates of the underlying one-particle Hamiltonian~\eqref{OneParticleHamil} are not macroscopically occupied if the interaction strength is large enough. This follows from the fact that these eigenstates are each supported on exactly one component of $\Lambda_N^{\omega}$.

\subsection{More general interactions}\label{subsection BEC more general case}

Finally, we study the case when a more general pair interactions is present. Namely, we now consider the $N$-particle Hamiltonian~\eqref{NBodyHamiltoniansoft}, that is,
\begin{equation}
H_N^{\omega}:=\sum_{i=1}^{N}\left(\laplace_i+V(\omega,x_i)\right)+\sum_{1 \le i < j \le N}w_N(\|x_i-x_j\|_{\mathbb{R}^d}) 
\end{equation}
where $w_N \in L^{\infty}(\mathds R) \cap L^{1}(\mathds R,x^{d-1}\ud x)$ is a non-negative function. We furthermore assume that for every $N \in \mathbb{N}$, there exist two numbers $a_N  > 0$ and $b_N > 0$ such that 
\begin{equation} \label{definition soft interaction}
w_N(x) \ge b_N \qquad \text{for almost every} \qquad x \in [-a_N,+a_N]\ .
\end{equation}
The main result of this section is the following theorem and says that any normalized one-particle states that is sufficiently localized, such as any eigenstate of the corresponding one-particle Hamiltonian~\eqref{OneParticleHamil}, is $\mathds P$-almost surely not macroscopically occupied, given that the pair interactions are sufficiently strong.

\begin{theorem}[Absence of BEC II]\label{CondensateXXX} Let $\nu > 0$ and $R > 0$ be given such that Theorem~\ref{TheoremVacancySet} holds. Let $\varphi_{N}^{\omega} \in \mathrm{L}^2(\Lambda_{N}^{\omega})$ be a normalized one-particle state and $A_N^{\omega}$ the number of components of $\Lambda_N^{\omega}$ with non-empty intersection with $\supp(\varphi_N^{\omega})$, $N \in \mathds N$, $\omega \in \Omega$. Furthermore, suppose that
\begin{equation}
\lim\limits_{N \to \infty} \frac{b_N a_N^{3d} N}{(A_N^{\omega})^3 \ln^3(N)} = \infty \ ,
\end{equation}
\begin{equation}
\lim\limits_{N \to \infty} \frac{a_N^{3d} N}{(A_N^{\omega})^3 \ln^3(N)} = \infty \ ,
\end{equation}
and
\begin{equation}
\lim_{N \to \infty} \ln^2 (N)\|w_N(\|\cdot\|_{\mathds{R}^d})\|_{\mathrm{L}^1(\mathds R^{d})} =0 \ .
\end{equation}
Then, for all $\beta \in (0,\infty)$, $(\varphi_{N}^{\omega})_{N \in \mathds N}$ is $\mathds{P}$-almost surely not macroscopically occupied.
\end{theorem}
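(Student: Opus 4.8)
The plan is to run the argument behind Theorem~\ref{MainResultBEC1} and its one-dimensional ancestor \cite[Theorem~3.3]{KernerPechmannHC}, but to replace the exact ``at most one particle per box'' constraint of the hardcore case by an energetic substitute: a box forced to hold several particles must carry a definite amount of repulsive energy, and the total interaction energy can in turn be bounded from above. As in Subsection~\ref{subsection BEC hard core} I would fix a decomposition of $\mathds R^d$ into boxes $\Lambda_N^{(n)}$, $n\in\mathds Z^d$, of sidelength $r_N\le a_N/\sqrt d$, so that any two points in the same box are at distance at most $a_N$; by \eqref{definition soft interaction} their pair interaction is then at least $b_N$. Using the positivity of the kernel $\DensityMatrixNbetaeins$ (so that $|\DensityMatrixNbetaeins(x,y)|\le(\DensityMatrixNbetaeins(x,x)\DensityMatrixNbetaeins(y,y))^{1/2}$) followed by Cauchy--Schwarz on each box, I would obtain the Pulè--Aonghusa-type bound \cite{PuleAonghusa87}
\[
\AverageParticleDensityNbeta(\varphi_N^{\omega})
\;\le\;\frac{1}{L_N^d}\left(\sum_{n\in\mathds Z^d}\left(\int_{\Lambda_N^{(n)}}|\varphi_N^{\omega}(x)|^2\,\ud x\right)^{1/2}\mu_n^{1/2}\right)^{2},
\qquad
\mu_n:=\int_{\Lambda_N^{(n)}}\DensityMatrixNbetaeins(x,x)\,\ud x ,
\]
which reduces to \eqref{inequality PuleAonghusa87} when $\mu_n\le 1$. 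Only boxes meeting $\supp(\varphi_N^{\omega})$ contribute, and Theorem~\ref{TheoremVacancySet} together with Remark~\ref{Remark 1} (refining the fixed boxes of sidelength $R/\sqrt d$ there to the smaller boxes of sidelength $r_N\sim a_N$) bounds their number by $M_N\lesssim A_N^{\omega}\ln(N)/a_N^d$.

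Next I would quantify the cost of concentration. Writing $\mathcal N_n$ for the number of particles in $\Lambda_N^{(n)}$, so that $\mu_n=\EnsembleExpectation{\mathcal N_n}$, the within-box pairs give
\[
\EnsembleExpectation{\,\textstyle\sum_{1\le i<j\le N}w_N(\|x_i-x_j\|_{\mathds R^d})\,}
\;\ge\;\frac{b_N}{2}\sum_{n\in\mathds Z^d}\EnsembleExpectation{\mathcal N_n(\mathcal N_n-1)} .
\]
Hence $\sum_n\EnsembleExpectation{\mathcal N_n(\mathcal N_n-1)}\le 2\,b_N^{-1}\EnsembleExpectation{W}$ with $W:=\sum_{1\le i<j\le N}w_N(\|x_i-x_j\|_{\mathds R^d})$, and from $\mu_n^2\le\EnsembleExpectation{\mathcal N_n^2}=\EnsembleExpectation{\mathcal N_n(\mathcal N_n-1)}+\mu_n$ one gets $\mu_n^{1/2}\le 1+\EnsembleExpectation{\mathcal N_n(\mathcal N_n-1)}^{1/4}$.

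The step I expect to be the main obstacle is the upper bound on $\EnsembleExpectation{W}$. The clean route is the Gibbs--Bogoliubov inequality for the splitting $H_N^{\omega}=H_N^{0,\omega}+W$ with $H_N^{0,\omega}=\sum_i(\laplace_i+V(\omega,x_i))$: comparing the Gibbs states of $H_N^{\omega}$ and of $H_N^{0,\omega}$ in the variational principle yields $\EnsembleExpectation{W}\le\langle W\rangle_0$, the interaction energy in the \emph{non-interacting} Gibbs state. The genuine work is then to control $\langle W\rangle_0=\tfrac12\iint w_N(\|x-y\|_{\mathds R^d})\,\varrho_0^{(2)}(x,y)\,\ud x\,\ud y$ in the random, localized, positive-temperature regime. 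I would estimate the two-point density by the one-particle density, $\varrho_0^{(2)}(x,y)\lesssim\bigl(\sup_z\varrho_0^{(1)}(z,z)\bigr)\varrho_0^{(1)}(x,x)$, to get $\langle W\rangle_0\lesssim N\,\|w_N(\|\cdot\|_{\mathds R^d})\|_{\mathrm L^1(\mathds R^{d})}\,\sup_z\varrho_0^{(1)}(z,z)$; bounding the maximal one-particle density $\sup_z\varrho_0^{(1)}(z,z)$ is exactly where the confinement of the gas to components of volume $\lesssim\ln(N)$ (Theorem~\ref{TheoremVacancySet}, Remark~\ref{Remark 1}) and the count of disjoint Poisson-free balls from Lemma~\ref{Lemma 3.3} enter, via a trial state of the associated energy-density estimate, and where the logarithmic factors in the hypotheses are produced. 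The assumption $\lim_{N}\ln^2(N)\,\|w_N(\|\cdot\|_{\mathds R^d})\|_{\mathrm L^1(\mathds R^{d})}=0$ then renders $\EnsembleExpectation{W}$ negligible on the relevant scale.

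Finally I would insert the occupation bounds into the first display. Using $\mu_n^{1/2}\le 1+\EnsembleExpectation{\mathcal N_n(\mathcal N_n-1)}^{1/4}$ and Cauchy--Schwarz over the $M_N\lesssim A_N^{\omega}\ln(N)/a_N^d$ contributing boxes, the right-hand side splits into a purely geometric term of order $M_N/L_N^d$ and an interaction term of order $\bigl(M_N\,b_N^{-1}\EnsembleExpectation{W}\bigr)^{1/2}/L_N^d$; since $L_N^d=N/\rho$, the hypothesis $\lim_N a_N^{3d}N/((A_N^{\omega})^3\ln^3(N))=\infty$ forces the first term to $0$, while the hypotheses $\lim_N b_N a_N^{3d}N/((A_N^{\omega})^3\ln^3(N))=\infty$ and $\lim_N\ln^2(N)\,\|w_N(\|\cdot\|_{\mathds R^d})\|_{\mathrm L^1(\mathds R^{d})}=0$ force the second term to $0$. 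Thus $\limsup_{N}\AverageParticleDensityNbeta(\varphi_N^{\omega})=0$. Since the geometric inputs (Theorem~\ref{TheoremVacancySet}, Remark~\ref{Remark 1}, Lemma~\ref{Lemma 3.3}) hold $\mathds P$-almost surely for all but finitely many $N$ and are independent of $\beta$, while the remaining estimates are deterministic and valid for every $\beta\in(0,\infty)$, a single $\mathds P$-null set is excluded simultaneously for all inverse temperatures, and the Borel--Cantelli lemma yields the claim.
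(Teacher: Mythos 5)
Your overall skeleton is sound and close in spirit to the paper's: the paper likewise partitions $\mathds R^d$ into boxes of sidelength $a_N/\sqrt d$, uses \eqref{definition soft interaction} to extract a within-box repulsion of strength $b_N$, and controls the number of contributing boxes via Theorem~\ref{TheoremVacancySet} ($K_N^{\omega}\lesssim A_N^{\omega}\ln(N)/a_N^d$); it implements the concentration-cost step in second quantization via the inequalities of \cite{PS86}, obtaining $\EnsembleExpectation{H_N^{\omega}}/L_N^d\ge \frac{b_N}{2L_N^d(K_N^{\omega})^3}\langle a^{\ast}(\varphi_N^{\omega})a(\varphi_N^{\omega})\rangle^2_{\DensityMatrixNbeta}-\frac{b_N\rho}{2}$, which is the contrapositive of your weighted Pul\`e--Aonghusa bound, and your final arithmetic with the three hypotheses does close (note $K_N^{\omega}\ge 1$, so the cubic-power hypotheses imply the linear-power conditions you need). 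The genuine gap is the step you yourself flag as the main obstacle: the upper bound on $\EnsembleExpectation{W}$. Your route via Gibbs--Bogoliubov, $\EnsembleExpectation{W}\le\langle W\rangle_0$, followed by $\langle W\rangle_0\lesssim N\,\|w_N(\|\cdot\|_{\mathds R^d})\|_{\mathrm L^1(\mathds R^d)}\,\sup_z\varrho_0^{(1)}(z,z)$, fails in principle: in the Kac--Luttinger landscape the \emph{ideal} Bose gas is precisely the system expected to condense into rare vacancy components of volume of order $\ln N$, so $\sup_z\varrho_0^{(1)}(z,z)$ can be as large as order $N/\ln N$, and your bound then gives an energy density of order $\rho N\|w_N\|_{\mathrm L^1}\cdot(\ln N)^{-1}$, which diverges even under the hypothesis $\ln^2(N)\|w_N\|_{\mathrm L^1}\to 0$. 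Your accompanying suggestion that Lemma~\ref{Lemma 3.3} and a trial state would bound $\sup_z\varrho_0^{(1)}(z,z)$ is misdirected: trial states yield upper bounds on free energies, not on local densities, and the confinement described by Theorem~\ref{TheoremVacancySet} makes the peak density of the ideal gas \emph{larger}, not smaller.

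The paper's actual mechanism (Lemma~\ref{EnergyDensity}) avoids the ideal-gas reference state entirely. By convexity of $\beta\mapsto\ln\tr(\mathrm e^{-\beta H_N^{\omega}})$ one has $\frac{\beta}{2}\EnsembleExpectation{H_N^{\omega}}\le \ln\tr(\mathrm e^{-(\beta/2)H_N^{\omega}})-\ln\tr(\mathrm e^{-\beta H_N^{\omega}})$; the first term is bounded per volume by the free Dirichlet gas (Ruelle), while the second is bounded using $-\ln\tr(\mathrm e^{-\beta H_N^{\omega}})\le\beta\,\|\Psi_N^{\omega}\|^{-2}\langle\Psi_N^{\omega},H_N^{\omega}\Psi_N^{\omega}\rangle$ with an explicit product trial state whose one-particle factor is a sum of bump functions placed in the $B_N^{\omega}\ge L_N^d/(c_N\ln N)$ disjoint Poisson-free unit balls of Lemma~\ref{Lemma 3.3}; the interaction energy of this spread-out state is of order $N^2\|w_N\|_{\mathrm L^1}/B_N^{\omega}$, and the hypothesis $\ln^2(N)\|w_N\|_{\mathrm L^1}\to 0$ exactly compensates the factor $c_N\ln N$. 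This yields $\EnsembleExpectation{H_N^{\omega}}\le CL_N^d$ almost surely, hence $\EnsembleExpectation{W}\le CL_N^d$ since the kinetic and external-potential parts are nonnegative; with that input, the remainder of your argument goes through. As written, though, your proposal replaces the one estimate that carries the real work with a bound that is quantitatively insufficient, so the proof does not stand.
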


In the rest of this work, we prove Theorem~\ref{CondensateXXX}. We proceed similarly as in \cite{KernerPechmannHC}. For the convenience of the reader and to be able to point out the differences between the higher-dimensional case discussed here and the one-dimensional case discussed in~\cite{KernerPechmannHC}, we present the main steps of the proof.

In a first step, we show that the expected energy density with respect to the canonical ensemble is bounded in the thermodynamic limit. However, in contrast to the one-dimensional setting in~\cite{KernerPechmannHC}, we have to assume that the pair interaction is weak enough in a suitable sense. This is due to the fact that, unlike in \cite[Lemma~A.1]{KernerPechmannHC}, we require the denominator in~\eqref{Inequality number of free balls} to converge to infinity at a certain speed.
\begin{lemma}[Bound energy density]\label{EnergyDensity} Let $\beta \in (0,\infty)$ be arbitrarily given. Assume that $\lim_{N \to \infty} \ln^2 (N) \|w_N(\|\cdot\|_{\mathds{R}^d})\|_{L^1(\mathds R^d)} =0$. Then there exists a constant $C>0$ such that $\mathds{P}$-almost surely, 
	\begin{equation} \label{equation lemma bound energy density}
	\limsup_{N \rightarrow \infty} \frac{\EnsembleExpectation{H_N^{\omega}}}{L^d_N} < C \ .
	\end{equation}
\end{lemma}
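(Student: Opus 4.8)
The plan is to bound the internal energy $\EnsembleExpectation{H_N^{\omega}}$ by comparing the partition function at two temperatures via convexity, and then estimating each of the two resulting partition functions separately. Write $Z_N(\beta) := \tr(\e^{-\beta H_N^{\omega}})$, the trace taken over the symmetric subspace. Since $H_N^{\omega}$ is the Dirichlet Laplacian on $(\Lambda_N^{\omega})^N$ perturbed by the bounded nonnegative interaction $\sum_{i<j} w_N$, it is nonnegative with discrete spectrum and $\e^{-\beta H_N^{\omega}}$ is trace class; hence $\beta \mapsto \ln Z_N(\beta)$ is finite, smooth and convex, with $\EnsembleExpectation{H_N^{\omega}} = -\tfrac{\ud}{\ud\beta}\ln Z_N(\beta)$. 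Applying the supporting-line inequality for the convex function $\ln Z_N$ between the points $\beta/2$ and $\beta$ yields the key estimate
\[
\EnsembleExpectation{H_N^{\omega}} \le \frac{2}{\beta}\bigl(\ln Z_N(\beta/2) - \ln Z_N(\beta)\bigr) \ .
\]
It therefore suffices to establish an upper bound $\ln Z_N(\beta/2) \le C_1 L_N^d$ and a lower bound $\ln Z_N(\beta) \ge -C_2 L_N^d$, both $\mathds{P}$-almost surely for all large $N$.

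For the upper bound I would simply drop the interaction: since $w_N \ge 0$ we have $H_N^{\omega} \ge -\sum_i \laplace_i$ with Dirichlet conditions on $(\Lambda_N^{\omega})^N$, hence $Z_N(\beta/2) \le Z_N^{0}(\beta/2)$, the free Bose-gas partition function. The subtlety is that the naive bound $Z_N^0 \le (\tr \e^{(\beta/2)\laplace})^N$ produces a spurious factor $\ln L_N$ (from ignoring the indistinguishability), so instead I would pass to the grand-canonical ensemble: for any fixed chemical potential $\mu < 0$,
\[
\ln Z_N^0(\beta/2) \le -\tfrac{\beta}{2}\mu N - \sum_k \ln\bigl(1 - \e^{-(\beta/2)(\epsilon_k - \mu)}\bigr) \ ,
\]
where $\epsilon_k$ are the Dirichlet eigenvalues on $\Lambda_N^{\omega}$. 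Expanding the logarithm and using the heat-trace bound $\tr \e^{t\laplace} \le |\Lambda_N^{\omega}|(4\pi t)^{-d/2} \le L_N^d (4\pi t)^{-d/2}$ turns the series into a convergent sum times $L_N^d$, while $-\tfrac{\beta}{2}\mu N = O(L_N^d)$ because $N = \rho L_N^d$. This gives $\ln Z_N(\beta/2) \le C_1 L_N^d$.

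For the lower bound I would exhibit a single normalized symmetric trial state $\Psi_N$ with $\langle \Psi_N, H_N^{\omega}\Psi_N\rangle \le C_2 L_N^d$; then $Z_N(\beta) \ge \langle \Psi_N, \e^{-\beta H_N^{\omega}}\Psi_N\rangle \ge \e^{-\beta \langle \Psi_N, H_N^{\omega}\Psi_N\rangle}$ by Jensen's inequality, i.e.\ $\ln Z_N(\beta) \ge -\beta C_2 L_N^d$. The state $\Psi_N$ is built from the Poisson-free balls supplied by Lemma~\ref{Lemma 3.3}: for a sequence $c_N \to \infty$ to be fixed below, there are $\mathds{P}$-almost surely at least $B_N^{\omega} \ge L_N^d/(c_N \ln N)$ disjoint unit-diameter balls lying inside $\Lambda_N^{\omega}$ (applying Lemma~\ref{Lemma 3.3} with radius enlarged by $R$ so that no obstacle reaches into a ball, which only affects constants). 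Distributing the $N$ bosons as evenly as possible among these balls, with all particles in a given ball placed in that ball's Dirichlet ground state, produces kinetic energy $O(N) = O(L_N^d)$, while the interaction energy is controlled by a two-body density estimate of order $\|w_N(\|\cdot\|_{\mathds R^d})\|_{\mathrm{L}^1(\mathds R^d)}\, N^2/B_N^{\omega} \approx \|w_N(\|\cdot\|_{\mathds R^d})\|_{\mathrm{L}^1(\mathds R^d)}\,\rho^2 c_N \ln(N)\, L_N^d$.

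The hard part will be this last interaction estimate, because it is precisely where the hypothesis $\ln^2(N)\|w_N(\|\cdot\|_{\mathds R^d})\|_{\mathrm{L}^1(\mathds R^d)} \to 0$ is consumed and where the freedom in choosing $c_N$ in Lemma~\ref{Lemma 3.3} is essential. I would take $c_N := \bigl(\ln^2(N)\,\|w_N(\|\cdot\|_{\mathds R^d})\|_{\mathrm{L}^1(\mathds R^d)}\bigr)^{-1/2}$, which tends to infinity by assumption while forcing $c_N \ln(N)\|w_N(\|\cdot\|_{\mathds R^d})\|_{\mathrm{L}^1(\mathds R^d)} = \|w_N(\|\cdot\|_{\mathds R^d})\|_{\mathrm{L}^1(\mathds R^d)}^{1/2} \to 0$. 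The interaction energy is then $o(L_N^d)$, so $\langle \Psi_N, H_N^{\omega}\Psi_N\rangle = O(L_N^d)$; feeding the two partition-function bounds into the convexity inequality gives $\limsup_{N\to\infty} \EnsembleExpectation{H_N^{\omega}}/L_N^d \le 2C_1/\beta + 2C_2 =: C$, $\mathds{P}$-almost surely, which is the claim.
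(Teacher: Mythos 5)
Your proposal is correct, and its skeleton is exactly the paper's: the convexity/supporting-line bound $\EnsembleExpectation{H_N^{\omega}} \le \tfrac{2}{\beta}\bigl(\ln Z_N(\beta/2) - \ln Z_N(\beta)\bigr)$, an upper bound on $\ln Z_N(\beta/2)$ by the free gas, and a lower bound on $\ln Z_N(\beta)$ via the Peierls--Jensen inequality with a trial state built on the Poisson-free balls of Lemma~\ref{Lemma 3.3}, with $c_N$ tuned against the hypothesis $\ln^2(N)\,\|w_N(\|\cdot\|_{\mathds R^d})\|_{\mathrm{L}^1(\mathds R^d)} \to 0$ (your choice $c_N = (\ln^2(N)\|w_N\|_{\mathrm{L}^1})^{-1/2}$, giving $c_N \ln(N)\|w_N\|_{\mathrm{L}^1} = \|w_N\|_{\mathrm{L}^1}^{1/2} \to 0$, is precisely the tuning the paper leaves implicit). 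You diverge in two sub-steps. First, for the free-gas upper bound the paper enlarges the domain to $\Lambda_N^N$ and simply cites the existence of the canonical ideal-gas free energy per volume (Ruelle, Theorem~3.5.8); your grand-canonical domination $\ln Z_N^0(\beta/2) \le -\tfrac{\beta}{2}\mu N - \sum_k \ln(1-\e^{-(\beta/2)(\epsilon_k-\mu)})$ with the heat-trace bound is self-contained, and your diagnosis that the naive bound $(\tr \e^{(\beta/2)\laplace})^N$ loses a factor $\ln L_N$ is accurate — that is exactly what the citation is there to avoid. Second, your trial state distributes $\sim N/B_N^{\omega}$ bosons into each ball's Dirichlet ground state, whereas the paper takes the $N$-fold tensor power of a single delocalized sum of bumps; the energetics are the same, but note that for your state the asserted interaction bound $\|w_N\|_{\mathrm{L}^1} N^2/B_N^{\omega}$ is not immediate for cross-ball pairs (crudely each of the $\binom{N}{2}$ pairs contributes $O(\|w_N\|_{\mathrm{L}^1})$, which loses the crucial $1/B_N^{\omega}$): you must use that the one-particle density is bounded pointwise by $C N/B_N^{\omega}$ thanks to the disjoint supports, so that $\tfrac12\iint w_N(\|x-y\|)\rho(x)\rho(y)\,\ud y\,\ud x \le \tfrac12 \|w_N\|_{\mathrm{L}^1}\|\rho\|_{\infty} N$ — the same mechanism by which the paper's $\|\psi_N^{\omega}\|^{-2}$ normalization produces $1/B_N^{\omega}$. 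Finally, your remark about enlarging the ball radius by $R$ so that no obstacle protrudes into the support of the trial function is a genuine refinement: Lemma~\ref{Lemma 3.3} only excludes Poisson \emph{points} from the unit balls, and the paper's bumps occupy the full ball, so strictly speaking this padding (which indeed only changes the constant $c$ in the proof of Lemma~\ref{Lemma 3.3}) is needed for the trial state to lie in the form domain.
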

\begin{proof}
Let a typical $\omega \in \Omega$ be given. As in \cite{KernerPechmannHC}, we prove~\eqref{equation lemma bound energy density} by showing that the right side of the inequality
\begin{equation} \label{inequaitly lemma bound energy density}
   \frac{\beta}{2} \dfrac{\EnsembleExpectation{H_N^{\omega}} }{L_N^d} \le  \dfrac{\ln(\tr(\mathrm{e}^{-(\beta/2) H_N^{\omega}}))}{L_N^d}  - \dfrac{\ln(\tr(\mathrm{e}^{-\beta H_N^{\omega}}))}{L_N^d} 
  \end{equation}
is bounded by a constant in the limit $N \to \infty$. We note that the inequality~\eqref{inequaitly lemma bound energy density} holds due to the facts that $\EnsembleExpectation{H_N^{\omega}} = - \frac{ \mathrm{d}}{\mathrm{d} \beta} \ln(\tr(\mathrm{e}^{-\beta H_N^{\omega}}))$ and that $\ln(\tr(\mathrm{e}^{-\beta H_N^{\omega}}))$ is a convex function in $\beta$.

Regarding the first term in~\eqref{inequaitly lemma bound energy density}, we approximate the eigenvalues of $H_N^{\omega}$ by the eigenvalues $(E_N^{j})_{j \in \mathds N_0}$ of the Dirichlet Laplacian $\laplace$ on $\Lambda_N^N$ and conclude that $\tr(\mathrm{e}^{-(\beta/2) H_N^{\omega}}) \le \sum_{j \in \mathds N_0} \mathrm{e}^{-(\beta/2) E_N^{j}}$.
Then we use the fact that there is a constant $C_1 = C_1(\beta) > 0$ such that $\lim_{N \to \infty} L_N^{-d} \ln(\sum_{j \in \mathds N_0} \mathrm{e}^{-(\beta/2) E_N^{j}}) = C_1$, see, for example, \cite[Theorem~3.5.8]{ruelle1999statistical}. Thus, there is a constant $C_1 > 0$ such that $\mathds P$-almost surely
\begin{align}
 \limsup_{N \rightarrow \infty} \dfrac{\ln(\tr(\mathrm{e}^{-(\beta/2) H_N^{\omega}}))}{L_N^d} \le C_1 \ .
\end{align}

Next, we show that the second term~\eqref{inequaitly lemma bound energy density} (including the minus sign) is bounded from above by a constant in the limit $N \to \infty$ as well. We use the inequality \cite[Lemma 14.1 and Remark 14.2]{LiebSeiringer}
\begin{align}
  - \ln(\tr(\mathrm{e}^{-\beta H_N^{\omega}})) & \le -\ln \left( \mathrm{e}^{-\beta \| \Psi_N^{\omega}\|_{\mathds R^d}^{-2} \langle \Psi_N^{\omega}, H_N^{\omega} \Psi_N^{\omega} \rangle} \right) = \beta \| \Psi_N^{\omega}\|_{\mathds R^d}^{-2} \langle \Psi_N^{\omega}, H_N^{\omega} \Psi_N^{\omega} \rangle \ ,
\end{align}
$N \in \mathds N$, for a state $\Psi_N^{\omega}$ in the domain of $H_N^{\omega}$.
We choose the $N$-particle state $\Psi_N^{\omega}$ to be a product state $\prod_{j=1}^{N} \psi_N^{\omega}(x_j)$. The one-particle state $\psi_N^{\omega}(x)$, $x \in \mathds R^d$, shall be constructed as follows: One considers a rotational symmetric function $f(\|x\|_{\mathds R^d})$ that is equal to one for $\|x\|_{\mathds R^d} \le 1/4$ and smoothly decreases to zero between $1/4\le \|x\|_{\mathds R^d} \le 1/2$. Then $\psi_N^{\omega}$ is the sum of all such functions placed within each of the disjoint balls with diameter $1$ that are within $\Lambda_N$ and free of Poisson points. As in Lemma~\ref{Lemma 3.3}, we denote the number of such disjoint balls by $B_N^{\omega}$. Then,
\begin{align}
 \begin{split}
 & \|\Psi_N^{\omega}\|_{\mathds R^d}^{-2} \langle \Psi_N^{\omega}, H_N^{\omega} \Psi_N^{\omega} \rangle = N \|\psi_N^{\omega}\|_{L^2(\mathds R^d)}^{-2} \int\limits_{\Lambda_N} |(\psi_N^{\omega})'(x)|^2 \, \mathrm{d} x \\
 & \qquad \qquad + \, \binom{N}{2} \|\psi_N^{\omega}\|_{L^2(\mathds R^d)}^{-4} \int\limits_{\Lambda_N} \int\limits_{\Lambda_N} w_N(\|x-y\|_{\mathds R^d}) |\psi_N^{\omega}(x)|^2 |\psi_N^{\omega}(y)|^2 \, \mathrm{d} y \, \mathrm{d} x \ .
 \end{split}
\end{align}
Now, by construction there are positive constants $c_1,c_2 > 0$ independent of $N$ such that $\int_{\Lambda_N} |(\psi_N^{\omega})'(x)|^2 \, \mathrm{d} x \le c_1 B_N^{\omega}$, $\|\psi_N^{\omega}\|_{L^2(\mathds R^d)}^{2} \ge c_2 B_N^{\omega}$ as well as $|\psi_N^{\omega}(x)| \le 1$ for all $x \in \mathds R^d$. Employing Lemma~\ref{Lemma 3.3} in combination with our condition on $\|w_N(\|\cdot\|_{\mathds{R}^d})\|_{L^1(\mathds R^d)}$, we finally conclude that there is a constant $C_2> 0$ such that $\mathds P$-almost surely,
\begin{equation}
 \limsup\limits_{N \to \infty} \dfrac{-\ln(\tr(\mathrm{e}^{-\beta H_N^{\omega}}))}{L_N^d} \le \beta C_2 \ .
\end{equation}
\end{proof}
We now continue with the proof of Theorem~\ref{CondensateXXX}: Suppose there was a set $\widetilde \Omega \subset \Omega$ with $\mathds P(\widetilde \Omega) >0$ and such that for all $\omega \in \widetilde \Omega$ there was a sequence of normalized one-particle states $(\varphi_N^{\omega})_{N \in \mathds N} \in L^2(\Lambda_N^{\omega})$ that are macroscopically occupied and that fulfill the properties described in Theorem~\ref{CondensateXXX}. Then, for all such $\omega \in \widetilde \Omega$, we will show in the subsequent that the expected energy density with respect to the canonical ensemble would diverge in the thermodynamic limit. However, since this is in contradiction to Lemma~\ref{EnergyDensity},  Theorem~\ref{CondensateXXX} follows.

It remains to prove divergence of the energy density: Using second quantization, an equivalent definition for a sequence of one-particle states to be macroscopically occupied reads 
	\begin{equation}
	\limsup_{N \rightarrow \infty}\frac{\EnsembleExpectation{a^\ast(\varphi_{N}^{\omega})a(\varphi_{N}^{\omega})}}{L^d_N} > 0 \ .
	\end{equation}
Here, the creation operator $a^{\ast}(\varphi_N^{\omega}) = \int_{\Lambda_N^{\omega}} \varphi^{\omega}_N(x)a^{\ast}(x) \ud x$ and the annihilation operator $a(\varphi_N^{\omega}) = \int_{\Lambda_N^{\omega}} \varphi_N^{\omega}(x)a (x)\ud x$ fulfill the canonical commutation relations $\left[a(x),a^{\ast}(y)\right] = \delta(x-y)$ and $\left[a(x),a(y)\right] = \left[a^{\ast}(x),a^{\ast}(y)\right]=0$, $x,y \in \Lambda_N^{\omega}$. Also, $\delta(\cdot)$ is the Dirac-$\delta$ distribution.

The key ingredient now is a lower bound for the expected energy density. Let $\{G_j\}$ be a partition of $\mathds R^d$ into boxes of sidelength $a_N/\sqrt{d} > 0$ similar to the one as constructed in Theorem~\ref{TheoremVacancySet}.
By $K_N^{\omega}$ we denote the number of boxes in $\{G_j\}$ with a non-empty intersection with $\supp(\varphi_N^{\omega})$. Employing~\eqref{definition soft interaction}, we now obtain
	\begin{equation}
	\frac{\EnsembleExpectation{H_N^{\omega}}}{L^d_N} \geq \frac{b_N}{2L^d_N}\sum_{j}^{K_N^{\omega}} \int_{G_j}\ud x\int_{G_j}\ud y \ \EnsembleExpectation{a^{\ast}(x)a^{\ast}(y)a(x)a(y)}\ ,
	\end{equation}
	where the summation is over all the boxes in $\{G_j\}$ that have a non-empty intersection with $\supp(\varphi_N^{\omega})$. Introducing the functions $\varphi^{(j),\omega}_{N}:=\varphi_{N}^{\omega}\mathds 1_{G_j}$, $j=1,...,K_N^{\omega}$, as well as using \cite[(14)-(16b)]{PS86} and the inequality $\left|\sum_{i=1}^{n}x_j \right|^2 \leq n \sum_{i=1}^{n}|x_i|^2$, one gets
\begin{equation}
\frac{\EnsembleExpectation{H_N^{\omega}}}{L^d_N} \geq \frac{b_N}{2 L^d_N (K_N^{\omega})^3} \langle a^{\ast}(\varphi^{\omega}_{N})a(\varphi^{\omega}_{N}) \rangle^{2}_{\DensityMatrixNbeta}-\frac{b_N\rho}{2} \ .
\end{equation}

Finally, by Theorem~\ref{TheoremVacancySet}, $\mathds P$-almost surely and for all but finitely many $N \in \mathds N$ we have $K_N^{\omega} \le C A_N^{\omega} \ln(N)$ for some constant $C>0$. This finishes the proof.

\bibliographystyle{amsalpha}
\bibliography{Literature}

\def\cprime{$'$}\def\polhk\#1{\setbox0=\hbox{\#1}{{\o}oalign{\hidewidth
  \lower1.5ex\hbox{`}\hidewidth\crcr\unhbox0}}}\def\cprime{$'$}\def\polhk\#1{\setbox0=\hbox{\#1}{{\o}oalign{\hidewidth
  \lower1.5ex\hbox{`}\hidewidth\crcr\unhbox0}}}
\providecommand{\bysame}{\leavevmode\hbox to3em{\hrulefill}\thinspace}
\providecommand{\MR}{\relax\ifhmode\unskip\space\fi MR }
\providecommand{\MRhref}[2]{%
  \href{http://www.ams.org/mathscinet-getitem?mr=#1}{#2}
}
\providecommand{\href}[2]{#2}
\begin{thebibliography}{BBCS20}

\bibitem[AAS20]{SchleinBECIII}
C.~Brennecke A.~Adhikari and B.~Schlein, \emph{Bose–{E}instein condensation
  beyond the {G}ross--{P}itaevskii regime}, Ann.~Henri~Poincar\'{e} (2020).

\bibitem[AB87]{aizenman1987sharpness}
M.~Aizenman and D.~J. Barsky, \emph{Sharpness of the phase transition in
  percolation models}, Comm.~Math.~Phys \textbf{108} (1987), 489--526.

\bibitem[AP87]{PuleAonghusa87}
P.~M. Aonghusa and J.~V. Pul{\'e}, \emph{{Hard cores destroy {B}ose--{E}instein
  condensation}}, Lett.~Math.~Phys. \textbf{14} (1987), no.~2, 117--121.

\bibitem[BBCS18]{SchleinBECII}
C.~Boccato, C.~Brennecke, S.~Cenatiempo, and B.~Schlein, \emph{Complete
  {B}ose--{E}instein condensation in the {G}ross--{P}itaevskii regime}, Comm.
  Math. Phys. \textbf{359} (2018), no.~3, 975--1026. \MR{3784538}

\bibitem[BBCS20]{SchleinBECI}
\bysame, \emph{Optimal rate for {B}ose--{E}instein condensation in the
  {G}ross--{P}itaevskii regime}, Comm. Math. Phys. \textbf{376} (2020), no.~2,
  1311--1395. \MR{4103969}

\bibitem[dS86]{PS86}
P.~de~Smedt, \emph{{The Effect of Repulsive Interactions on {B}ose--{E}instein
  Condensation}}, J. Stat. Phys. \textbf{45} (1986), 201--213.

\bibitem[Ein24]{EinsteinBECI}
A.~Einstein, Sitzber. {K}gl. {P}reuss. {A}kadm. {W}iss. (1924), 261--267.

\bibitem[Ein25]{EinsteinBECII}
\bysame, Sitzber. {K}gl. {P}reuss. {A}kadm. {W}iss. (1925), 3--14.

\bibitem[Gou08]{GouereI}
J.~B. Gou\'{e}r\'{e}, \emph{Subcritical regimes in the {P}oisson {B}oolean
  model of continuum percolation}, Ann.~Prob. \textbf{36} (2008), no.~4.

\bibitem[Ham57]{hammersley1957percolation}
J.~M. Hammersley, \emph{Percolation processes: Lower bounds for the critical
  probability}, Ann. Math. Stat. \textbf{28} (1957), no.~3, 790--795.

\bibitem[Kes02]{kesten2002some}
H.~Kesten, \emph{Some highlights of percolation}, arXiv preprint math/0212398
  (2002).

\bibitem[KL73]{kac1973bose}
M.~Kac and J.~M. Luttinger, \emph{{Bose--Einstein condensation in the presence
  of impurities}}, J.~Math.~Phys. \textbf{14} (1973), no.~11, 1626--1628.

\bibitem[KL74]{kac1974bose}
\bysame, \emph{{Bose--Einstein condensation in the presence of impurities.
  II}}, J.~Math.~Phys. \textbf{15} (1974), no.~2, 183--186.

\bibitem[KP21]{KernerPechmannHC}
J.~Kerner and M.~Pechmann, \emph{On the effect of repulsive pair interactions
  on {B}ose--{E}instein condensation in the {L}uttinger--{S}y model}, Proc. Am.
  Math. Soc. (2021).

\bibitem[KPS19]{KPS18}
J.~Kerner, M.~Pechmann, and W.~Spitzer, \emph{{Bose--Einstein condensation in
  the Luttinger--Sy model with contact interaction}}, Ann. Henri Poincar{\'e}
  \textbf{20} (2019), 2101--2134.

\bibitem[LS02]{LiebSeiringerProof}
E.~H. Lieb and R.~Seiringer, \emph{{Proof of {B}ose--{E}instein condensation
  for dilute trapped gases}}, Phys. Rev. Lett. \textbf{88} (2002), 170409.

\bibitem[LS10]{LiebSeiringer}
\bysame, \emph{{The Stability of Matter in Quantum Mechanics}}, Cambridge
  University Press, 2010.

\bibitem[LVZ03]{LVZ03}
J.~Lauwers, A.~Verbeure, and V.~A. Zagrebnov, \emph{{Proof of
  {B}ose--{E}instein condensation for interacting gases with a one-particle
  gap}}, J.~Phys.~A \textbf{36} (2003), 169--174.

\bibitem[LZ07]{LenobleZagrebnovLuttingerSy}
O.~Lenoble and V.~A. Zagrebnov, \emph{{Bose--{E}instein condensation in the
  {L}uttinger--{S}y model}}, Mark.~Proc.~Rel.~Fields \textbf{13} (2007), no.~2,
  441--468.

\bibitem[Men86]{menshikov1986coincidence}
M.~V. Menshikov, \emph{Coincidence of critical points in percolation problems},
  Soviet Mathematics Doklady, vol.~33, 1986, pp.~856--859.

\bibitem[Mic07]{M07}
A.~Michelangeli, \emph{{Reduced density matrices and Bose--Einstein
  condensation}}, SISSA \textbf{39} (2007).

\bibitem[MR96]{meester1996continuum}
R.~Meester and R.~Roy, \emph{Continuum percolation}, Cambridge University
  Press, 1996.

\bibitem[Rue99]{ruelle1999statistical}
D.~Ruelle, \emph{{Statistical mechanics: Rigorous results}}, World Scientific,
  1999.

\bibitem[Szn98]{sznitman1998brownian}
A.-S. Sznitman, \emph{Brownian motion, obstacles and random media}, Springer,
  1998.

\end{thebibliography}
	
\end{document}